\documentclass[10pt,a4paper]{extarticle}
\usepackage[margin=2.5cm]{geometry}

\usepackage[T1]{fontenc}
\usepackage{microtype}

\usepackage{graphicx,xcolor}
\definecolor{newgreen}{RGB}{0,158,115}

\usepackage{amsmath,amsfonts,amssymb,amsthm,mathtools,bbm}
\usepackage{esint}
\usepackage{braket}
\usepackage{dsfont}
\numberwithin{equation}{section}

\usepackage[shortlabels]{enumitem}
\usepackage{booktabs}
\usepackage[font=small,margin=0.1\linewidth]{caption}
\usepackage{subcaption}
\usepackage{float}

\usepackage{comment}

\usepackage[colorlinks=true,hyperindex=true]{hyperref}
\hypersetup{
  citecolor=newgreen,
  linkcolor=newgreen,
  urlcolor=newgreen,
  pdfnewwindow=true,
  pdfstartview={FitH}
}

\usepackage{nameref}

\newtheorem{theorem}{Theorem}[section]
\newtheorem{proposition}[theorem]{Proposition}
\newtheorem{lemma}[theorem]{Lemma}
\newtheorem{corollary}[theorem]{Corollary}

\newtheorem{definition}[theorem]{Definition}
\newtheorem*{claim*}{Claim}

\theoremstyle{definition}
\newtheorem{remark}[theorem]{Remark}

\def\rr{{\mathds R}}

\def\nn{{\mathds N}}

\def\cA{{\cal A}}

\def\cF{{\cal F}}

\def\E{{\mathds{E}}}
\def\L{{\mathds{L}}}
\def\P{{\mathds{P}}}

\makeatletter
\providecommand*{\diff}{\@ifnextchar^{\DIfF}{\DIfF^{}}}
\def\DIfF^#1{%
  \mathop{\mathrm{\mathstrut d}}\nolimits^{#1}\gobblespace}
\def\gobblespace{\futurelet\diffarg\opspace}
\def\opspace{%
  \let\DiffSpace\!%
  \ifx\diffarg(\let\DiffSpace\relax\else
  \ifx\diffarg\let\DiffSpace\relax\else
  \ifx\diffarg\{\let\DiffSpace\relax\fi\fi\fi\DiffSpace}
\makeatother

\newcommand{\Exp}[1]{\mathrm{e}^{#1}}
\DeclareMathOperator{\supp}{supp}

\begin{document}

\title{Stability of the Shannon--McMillan--Breiman Theorem
under Sublinear Parsings}

\author{ Raphaël\ Grondin }
\date{}

\maketitle

\begin{center}
  \begin{minipage}[b]{0.5\textwidth}
    \small
    \centering
     Columbia University \\
    Department of Mathematics  \\
   New York NY, United States
  \end{minipage}%
\end{center}

\begin{abstract}
We establish a stability result for the Shannon--McMillan--Breiman theorem on
the one--sided finite shift space. For any shift--invariant probability measure
\(\P\) and any data--dependent parsing whose number of blocks
\(c_N\) satisfies \(c_N=o(N)\) almost surely, we show that the normalized sum of
the negative log--likelihoods of the parsing blocks converges \(\P\)--almost
surely and in \(\L^1(\P)\) to the entropy rate function \(h_\P\).

Equivalently, we obtain a new structural result for shift-invariant probability
measures on the one-sided finite shift space, in the form of an approximate
factorization of cylinder probabilities. Namely, for any sublinear parsing
\(x_1^N=w_1^{(N)}\dots w_{c_N}^{(N)}\),
\[
\P([x_1^N])
=
\Bigg(\prod_{i=1}^{c_N} \P([w_i^{(N)}])\Bigg)
\exp\{o(N)\},
\qquad \P\text{-a.s. and in }\L^1(\P).
\]
We further show that the stability result persists under subextensive perturbations
of the parsing blocks, and that the sublinearity of \(c_N\) is the sharp threshold
for validity at this level of generality, via a direct counterexample.
\end{abstract}

\tableofcontents

\section{Introduction}

Let $\cA$ be a finite alphabet and let $\Omega=\cA^{\nn}$ be the one--sided shift
space, equipped with the $\sigma$--field $\cF$ generated by the cylinder sets
$[x_1^N]:=\{y\in\Omega:y_1^N=x_1^N\}$, where $x_1^N=(x_1,x_2,\dots ,x_N)$ for
$x=(x_{k})_{k\geq 1}\in \Omega$.
Throughout, $\P$ denotes a shift--invariant probability measure on
$(\Omega,\cF)$, and $T:\Omega\to\Omega$, $(Tx)_k=x_{k+1}$, the left shift.

\medskip

The Shannon--McMillan--Breiman (SMB) theorem is a cornerstone of information theory
and ergodic theory.
It asserts that there exists a shift--invariant function $h_\P\in \L^1(\P)$ (the
entropy rate function) such that
\[
    \lim_{N\to\infty}-\frac1N \log \P([x_1^N])=h_\P(x),
    \qquad \P\text{-a.s.\ and in }\L^1(\P),
\]
and whose expectation $\E_\P(h_\P)=h(\P)$ is the entropy rate of $\P$
(see \eqref{eq:h} in Section~\ref{sec:notation}).
If $\P$ is ergodic, then $h_\P=h(\P)$ holds $\P$--a.s. (see, e.g.,
\cite{McM53,Breiman57,Brei60,SH96}).

\medskip

In many applications, however, one does not work directly with full blocks of the form
$x_1^N$, but rather with variable--length \emph{parsings} of these blocks.
In full generality, a parsing of $x_1^N$ is a decomposition
\begin{align}\label{eq:parsing intro}
    x_1^N = w_1^{(N)}(x)\cdots w_{c_N(x)}^{(N)}(x)
\end{align}
into contiguous blocks, whose lengths may vary arbitrarily and may depend on the
underlying sequence $x$.
Such parsings arise naturally in parsing--based
entropy and cross--entropy estimation schemes, including those inspired by the methodology of
Lempel--Ziv \cite{LZ77,LZ78,MZ93}. Parsings also appear in
ergodic--theoretic constructions, where discrete--time trajectories are
partitioned into contiguous blocks according to stopping times or successive
return times; see, for example, \cite{SH96}. Finally, block decompositions of
discrete--time samples are a standard tool in large--deviation and concentration
arguments, where long trajectories are split into contiguous blocks and boundary
effects are controlled under additional dependence assumptions; see, e.g.,
\cite{Rio00}.

Given a parsing as in \eqref{eq:parsing intro}, we consider the quantity
\begin{equation}\label{eq:IN}
     \frac1N \sum_{i=1}^{c_N(x)} -\log \P\bigl([w_i^{(N)}(x)]\bigr),
\end{equation}
and ask under which conditions on the parsing and the underlying measure $\P$ the quantity in~\eqref{eq:IN}
converges $\P$--almost surely and in $\L^1(\P)$ to the entropy rate function $h_\P$.

\medskip

A central difficulty in answering this question is that the quantity~\eqref{eq:IN}
replaces the probability of the entire block \(x_1^N\) by a product of probabilities
of smaller blocks, thereby suppressing all dependence between successive
segments. The SMB theorem, by contrast, concerns the single global quantity
$-\log\P([x_1^N])$, in which correlations are fully retained. The parsing--based
sum~\eqref{eq:IN} therefore treats the blocks as if they were
probabilistically independent.

Moreover, the block boundaries may depend arbitrarily on the observed sample
and on $N$, and may be constructed by any (possibly random)
data--dependent procedure. In particular, the parsing need not arise from
stopping times, nor satisfy any structural compatibility with the left shift.
The blocks could, for instance, be selected so as to minimize or otherwise
bias conditional probabilities across block boundaries.

In the main theorem below (Theorem~\ref{thm:main}), we show that this
convergence question nevertheless admits an affirmative answer at a very high
level of generality. Specifically, the convergence holds whenever the number
of parsing blocks $c_N$ grows sublinearly with the observation length~$N$
and $\P$ is shift--invariant.

The argument does not rely on
large--deviation refinements of the SMB theorem, nor on Gibbsian or explicit
decoupling assumptions controlling conditional probabilities between
successive blocks, as in~\cite{BGPR24,BGPR26}. Instead, it is based on the
martingale decomposition underlying Breiman’s original proof of the SMB
theorem~\cite{Breiman57,Brei60}, and brings out a structural
feature of Breiman’s decomposition—its robustness under arbitrary
sublinear block decompositions—which does not appear to have been
explicitly isolated in this generality.

At this level of generality on the parsing procedure and the underlying measure
$\P$, the sublinear growth condition \eqref{eq:sublinear} below is essentially sharp: if the block
count is allowed to grow linearly, there exist parsings and ergodic measures~$\P$
for which the quantity in~\eqref{eq:IN} fails to converge. This is discussed in
Section~\ref{sec:necessity}.

\medskip

\noindent\textbf{Main result.}
\begin{theorem}\label{thm:main}
Let $\P$ be a shift--invariant probability measure on $(\Omega,\cF)$, and
$\{w_i^{(N)}\}_{i=1}^{c_N}$ any parsing of the form \eqref{eq:parsing intro} satisfying
\begin{align}\label{eq:sublinear}
     \lim_{N\to\infty} \frac{c_N(x)}{N} = 0,
    \qquad \P\text{-a.s.}
\end{align}

Then, with $h_\P$ the entropy rate function appearing in the SMB theorem,
\[
    \lim_{N\to\infty}
    \frac{1}{N}\sum_{i=1}^{c_N(x)} -\log\P\bigl([w_i^{(N)}(x)]\bigr)
    = h_\P(x),
    \qquad \text{holds }\P\text{-a.s.\ and in }\L^1(\P).
\]
In particular, $\E_\P(h_\P)=h(\P)$.
\end{theorem}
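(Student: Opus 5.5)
The plan is to rewrite the parsing sum as an ergodic sum of Breiman's conditional information functions, exactly as in Breiman's proof of the SMB theorem, and to show that it differs from the SMB sum only at a small, data-dependent set of positions.

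\textbf{Setup.} For $m\ge 0$ let
\[
g_m(x):=-\log\P\bigl(x_1\mid x_2^{m+1}\bigr),
\]
where $g_0(x)=-\log\P([x_1])$. Let $g_\infty(x):=-\log\P(x_1\mid x_2^\infty)$.

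By the martingale convergence theorem, $g_m\to g_\infty$ $\P$-a.s. By Chung's maximal lemma, $G^*:=\sup_{m\ge0} g_m\in\L^1(\P)$. Both facts are standard ingredients of Breiman's argument \cite{Breiman57,Brei60}.

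\textbf{Writing each block as an ergodic sum.} Take a block $w_i^{(N)}=x_a^b$. Shift-invariance gives $\P([x_a^b])=\P([(T^{a-1}x)_1^{b-a+1}])$. Expanding this by the chain rule from the right yields
\[
-\log\P([x_a^b])=\sum_{k=a}^{b} g_{b-k}(T^{k-1}x).
\]
Summing over all blocks gives
\[
\sum_{i=1}^{c_N}-\log\P([w_i^{(N)}])=\sum_{k=1}^N g_{m_k}(T^{k-1}x),
\]
where $m_k=m_k^{(N)}(x)$ is the distance from $k$ to the right end of its block. This identity is pointwise. The fact that the parsing is data-dependent and need not be compatible with $T$ therefore plays no role.

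\textbf{Comparison with the ergodic sum of $g_\infty$.} By Birkhoff's theorem, $\frac1N\sum_{k=1}^N g_\infty(T^{k-1}x)\to \E_\P(g_\infty\mid\mathcal I)$ a.s. and in $\L^1$, where $\mathcal I$ is the invariant $\sigma$-field. This limit is $h_\P$; this is the content of the SMB theorem via Breiman's proof. It therefore suffices to show
\[
\frac1N\sum_{k=1}^N |g_{m_k}-g_\infty|(T^{k-1}x)\to 0 \quad\text{a.s.}
\]

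Fix $K$ and set $G_K:=\sup_{m\ge K}|g_m-g_\infty|$. Then $G_K\le G^*+g_\infty\in\L^1$ and $G_K\downarrow 0$ a.s. Split the positions $k$ into two groups.

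\emph{Positions with $m_k\ge K$.} Their contribution is at most $\frac1N\sum_{k\le N}G_K(T^{k-1}x)$. This tends a.s. to $\E(G_K\mid\mathcal I)$, which tends to $0$ as $K\to\infty$ by conditional monotone convergence.

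\emph{Positions with $m_k<K$.} This is the main obstacle. These positions form an arbitrary, data-dependent set $S_N$ of size at most $Kc_N$, so no ergodic theorem applies to them directly. I would handle this by truncation. Put $H:=G^*+g_\infty$, and fix a level $L$. Then
\[
\frac1N\sum_{k\in S_N}H(T^{k-1}x)\le \frac{LKc_N}{N}+\frac1N\sum_{k=1}^N H\mathbf 1_{\{H>L\}}(T^{k-1}x).
\]
The first term vanishes by \eqref{eq:sublinear}. The second tends a.s. to $\E(H\mathbf 1_{\{H>L\}}\mid\mathcal I)$, which tends to $0$ as $L\to\infty$. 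Letting $N\to\infty$, then $L\to\infty$, then $K\to\infty$ gives a.s. convergence.

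\textbf{$\L^1$ convergence.} Since $0\le g_{m_k}\le G^*$, the normalized parsing sum is dominated by $\frac1N\sum_{k\le N}G^*(T^{k-1}x)$. This sequence converges in $\L^1$ by the ergodic theorem, hence is uniformly integrable. The parsing sums are then uniformly integrable as well, and a.s. convergence upgrades to $\L^1$ convergence.

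\textbf{Identification of the mean.} Finally, $\E_\P(h_\P)=\E_\P(g_\infty)=h(\P)$ follows from the standard identification of $h(\P)$ as the limit of $\E_\P(g_m)$.
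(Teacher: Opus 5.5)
Your proof is correct. Its skeleton matches the paper's: Breiman's chain-rule decomposition of each block into conditional-information terms indexed by the distance $m_k$ to the block's right end, then a split into positions far from that end and positions near it. The far positions are controlled by the Birkhoff sum of a tail supremum (the paper's $\psi_{m,M}$, your $G_K$); the near positions number at most a constant times $c_N$.

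The routes differ in two places.

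First, the comparison function. The paper compares with the truncation $\log Z_M$, not with $g_\infty=\log Z$. The Birkhoff sum of $\log Z_M$ is the same for every parsing, so the parsing sum is compared with the trivial parsing $-\log\P([x_1^N])$, and the SMB theorem is invoked as a black box at the end. The price is a third limit $M\to\infty$, taken along a subsequence $\widetilde M_j$. You compare directly with $g_\infty$ and identify the limit as $\E(g_\infty\mid\mathcal I)$. This makes your argument self-contained, since SMB is just the case $c_N\equiv 1$. Working with conditional expectations on the invariant $\sigma$-field also removes the need for a subsequence.

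Second, the near-boundary terms. Because of the truncation, each such term in the paper depends on finitely many coordinates and is therefore bounded on $\supp\P$. Lemma~\ref{lem:conc} thus reduces to the one-line deterministic bound $(m+1)\Delta_{m,M}c_N$. Your $g_\infty$ is unbounded, so you need the level-$L$ truncation of $H$ together with the ergodic theorem applied to $H\mathbf 1_{\{H>L\}}$. That is exactly the argument the paper gives for Proposition~\ref{prop:extension}.

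What your version buys is that finiteness of $\cA$ enters only through $G^*\in\L^1$ (Chung's lemma). The same argument should therefore cover countable alphabets with $H(\P_1)<\infty$, an extension the paper only sketches in Section~\ref{sec:infinitealphabets}. Your $\L^1$ step is also more direct: domination by the Birkhoff averages of $G^*$ gives uniform integrability immediately, instead of bounding expectations limit by limit. The paper's version buys a trivial, purely deterministic use of sublinearity. It also directly bounds $\bigl|\sum_i-\log\P([w_i^{(N)}])+\log\P([x_1^N])\bigr|$, which is the approximate-factorization form of Corollary~\ref{cor:factorization}.
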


Theorem~\ref{thm:main} admits the following structural reformulation as
an approximate factorization of cylinder probabilities under mere
shift--invariance.
\begin{corollary}[Approximate Factorization]\label{cor:factorization}
Let $\P$ be a shift--invariant probability measure on $(\Omega,\cF)$,
and let $x_1^N = w_1^{(N)}\cdots w_{c_N}^{(N)}$ be any parsing of the form
\eqref{eq:parsing intro} satisfying the condition \eqref{eq:sublinear}.
Then there exists a sequence of measurable functions
$r_N:\Omega\to\rr$ satisfying $\frac{r_N(x)}{N}\to 0$, $\P$-a.s and in $\L^1(\P)$, such that
\begin{equation}\label{eq:decoupling0}
\P([x_1^N])
=
\Bigg(\prod_{i=1}^{c_N} \P([w_i^{(N)}])\Bigg)
\exp\{r_N(x)\}.
\end{equation}
\end{corollary}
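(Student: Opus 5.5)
We derive Corollary~\ref{cor:factorization} directly from Theorem~\ref{thm:main} combined with the SMB theorem. The idea is to define the remainder explicitly and show it is $o(N)$. On the full-measure set where $\P([x_1^N])>0$ for all $N$, set
\[
r_N(x):=\log\P([x_1^N])-\sum_{i=1}^{c_N(x)}\log\P\bigl([w_i^{(N)}(x)]\bigr).
\]
Each block $w_i^{(N)}$ is a contiguous subword of $x_1^N$. By shift--invariance, $\P([w_i^{(N)}])\ge\P([x_1^N])>0$ whenever the latter is positive. Hence $r_N$ is finite on this set, and we define it as $0$ on the complementary null set. Measurability of $r_N$ follows from the measurability of the parsing, since the parsing is implicitly assumed measurable for the quantity \eqref{eq:IN} to make sense. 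With this choice, \eqref{eq:decoupling0} holds identically after exponentiating.

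Next we show $r_N/N\to0$ almost surely. Dividing by $N$ gives
\[
\frac{r_N(x)}{N}=\frac1N\sum_{i=1}^{c_N(x)}-\log\P\bigl([w_i^{(N)}(x)]\bigr)-\Bigl(-\frac1N\log\P([x_1^N])\Bigr).
\]
By Theorem~\ref{thm:main}, the first term converges $\P$-a.s.\ to $h_\P$. By the SMB theorem, the second term also converges $\P$-a.s.\ to $h_\P$. The difference therefore tends to $0$ almost surely.

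For $\L^1$ convergence, we use the triangle inequality:
\[
\E_\P\Bigl|\frac{r_N}{N}\Bigr|\le \E_\P\Bigl|\frac1N\sum_i-\log\P([w_i^{(N)}])-h_\P\Bigr|+\E_\P\Bigl|-\frac1N\log\P([x_1^N])-h_\P\Bigr|.
\]
Both terms on the right tend to $0$, by the $\L^1$ parts of Theorem~\ref{thm:main} and of the SMB theorem respectively. In particular, $r_N/N\in\L^1(\P)$ for every $N$ large enough.

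The only delicate point is bookkeeping on null sets: we need $\P([w_i^{(N)}])>0$ so that the logarithms are finite. This is handled by the inclusion $[x_1^N]\subset T^{-(k-1)}[w_i^{(N)}]$, where $k$ is the starting position of the block $w_i^{(N)}$, together with shift--invariance of $\P$. All the substantive work is contained in Theorem~\ref{thm:main}; the corollary is a direct reformulation of it.
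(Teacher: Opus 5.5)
Your proposal is correct and matches the paper's argument: the paper presents the corollary as a direct reformulation of Theorem~\ref{thm:main} with $r_N(x)=\log\P([x_1^N])-\sum_{i}\log\P([w_i^{(N)}])$, and your check via $[x_1^N]\subset T^{-(k-1)}[w_i^{(N)}]$ and shift-invariance correctly ensures the logarithms are finite on a full-measure set. The only difference is the order of steps: the paper's proof of Theorem~\ref{thm:main} shows $r_N/N\to0$ directly (the left-hand side of \eqref{eq:final-bound} is exactly $|r_N|/N$) before invoking SMB, whereas you recover it afterwards from the theorem and SMB via the triangle inequality.
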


\begin{remark}
If one relaxes the assumption on the parsing to
$\frac{c_N(x)}{N}\xrightarrow{\P}0$,
then the $\L^1(\P)$ convergence in Theorem~\ref{thm:main} still holds.
\end{remark}

\begin{remark}[Robustness under subextensive perturbations of the parsing]\label{rem:rigid}
A robustness feature of our argument, explored in
Section~\ref{sec:robustness}, is that the stability
Theorem~\ref{thm:main} is insensitive to subextensive perturbations of the
parsing blocks.
More precisely, for every $N$, we can replace each $w_i^{(N)}$ either by
\begin{itemize}
    \item a \emph{sub--block} $\underline w_i^{(N)}\subseteq w_i^{(N)}$, or
    \item a \emph{super--block} $w_i^{(N)}\subseteq \overline w_i^{(N)}$ satisfying
    a non--degeneracy condition, see Section~\ref{sec:robustness},
\end{itemize}
and denote in both cases the resulting block by $\widetilde w_i^{(N)}$.
Assuming that the total modification is subextensive, namely
\begin{equation}\label{eq:subext-mod}
    \sum_{i=1}^{c_N(x)}
    |\widetilde w_i^{(N)}(x)|
    = N+o(N),
    \qquad \P\text{-a.s.},
\end{equation}
the conclusion of Theorem~\ref{thm:main} still holds.
\end{remark}

\subsection{Preview}

The remainder of the paper is organized as follows.
In Section~\ref{sec:main}, we introduce the notation and establish the main
Theorem~\ref{thm:main}. Section~\ref{sec:necessity} establishes the sharpness
of the condition \eqref{eq:sublinear} via a general counterexample.

In Section~\ref{sec:consequences}, we investigate the robustness of the main
result under subextensive perturbations of the parsing, as anticipated in
Remark~\ref{rem:rigid}, and interpret this phenomenon in the spirit of
thermodynamic limits. In Section~\ref{sec:infinitealphabets}, we discuss possible
extensions of Theorem~\ref{thm:main} to non-finite alphabets~$\cA$, within the framework of
Barron’s generalized Shannon--McMillan--Breiman theorem for densities~\cite{Bar85}.

\section{Notation and Proof of Theorem~\ref{thm:main}}\label{sec:main}

\subsection{Notation}\label{sec:notation}

Let the space $\Omega \coloneqq \{(x_k)_{k\in\nn} : x_k \in \cA \text{ for all } k\in\nn\}$ be
equipped with the $\sigma$--field $\cF$ generated by cylinder sets of the form
$[x_1^n] := \{y \in \Omega : y_1^n = x_1^n\}$, where
$x_1^n =( x_1, x_2, \dots, x_n)$ for the sequence $x=(x_k)_{k\geq 1} \in \Omega$.
The \emph{shift} map $T:\Omega \to \Omega$, defined by $(Tx)_k \coloneqq x_{k+1}$,
is a measurable surjection.
We let $\P$ denote a shift--invariant (i.e.\ $T$--invariant) probability measure
on $(\Omega,\cF)$.
For such a measure, we denote by $\P_n$ its $n$--th marginal, that is, the
probability measure on $\cA^n$ defined by
$\P_n(x_1^n) := \P([x_1^n])$. For two positive sequences of real numbers $(a_n)_{n\geq 1}$ and $(b_n)_{n\geq 1}$, we write
\[a_n\sim b_n,\]
whenever
\[\lim_{n\to\infty}\frac{a_n}{b_n}=1.\]
We define
\[
    \supp \P_n := \{a \in \cA^n : \P_n(a) > 0\}
\]
and
\[
    \supp \P := \{x \in \Omega : x_1^n \in \supp \P_n \text{ for all } n \in \nn\}
    = \bigcap_{n\in\nn}\supp\P_n.
\]
For $n \ge 1$, we denote by $H(\P_n)$ the Shannon entropy of the $n$--th marginal,
defined by
\begin{align}\label{eq:shannon entropy}
    H(\P_n) := -\sum_{a \in \cA^n} \P_n(a) \log \P_n(a)\in[0,\log|\cA|].
\end{align}

Finally, the entropy rate of $\P$ is defined as the limit
\begin{align}\label{eq:h}
     h(\P) := \lim_{n\to\infty} \frac{1}{n} H(\P_n),
\end{align}
which exists in $[0,\log|\cA|]$ by sub-additivity.

\subsection{A Martingale Decomposition}\label{subsec:martingale}

As stated in the introduction, the proof of Theorem~\ref{thm:main} relies on a
martingale naturally associated to a shift--invariant probability measure
$\P$ on $(\Omega,\cF,(\cF_n)_{n\geq 1})$, where $\cF_n$ denotes the $\sigma$--field
generated by the cylinder sets $\{[x_1^n] : x_1^n \in \cA^n\}$.
We recall here the basic properties of this martingale and the resulting decomposition of
$-\log \P([x_1^n])$.
The following result is standard, and its proof is postponed to the
appendix (see Section~\ref{sec:proof-lem-martingale}).

\begin{lemma}\label{lem:martingale}
Let $\P$ be a shift-invariant probability measure on $(\Omega,\cF)$.
Define $(Z_n(x))_{n\ge1}$ for $x\in\supp\P$ by
\begin{align}
  Z_n(x):=\frac{\P([x_2^n])}{\P([x_1^n])}, \qquad n\ge2,
\end{align}
and set $Z_1(x):=\P([x_1])^{-1}$. Then:
\begin{enumerate}
\item[(i)]
$(Z_n)_{n\ge1}$ is a
$\big((\cF_n)_{n\ge1},\P\big)$-martingale with values in $[1,\infty)$.
In particular, we have the $\P$-a.s.\ convergence $Z_n\to Z$, thus also $\log Z_n\to\log Z$.

\item[(ii)]
If $Z_{\max}:=\sup_{n\ge1}Z_n$, then
\[
\P(Z_{\max}>t)\le \frac{|\cA|}{t},
\qquad
\P(\log Z_{\max}>t)\le |\cA|\;\Exp{-t},\qquad \forall t>0.
\]
Consequently, $\log Z_{\max}\in\L^1(\P)$ and $\log Z_n\to\log Z$ also holds in $\L^1(\P)$.

\item[(iii)]
With the notation of \eqref{eq:shannon entropy} and \eqref{eq:h}, we have
\[
\E(\log Z_n)=H(\P_n)-H(\P_{n-1}),\quad \E(\log Z)=h(\P).
\]
\end{enumerate}
\end{lemma}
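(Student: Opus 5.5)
The plan is to verify the three items directly from shift--invariance and the definition of $Z_n$. I will follow Breiman's original argument, using a stopping--time decomposition for the maximal inequality.

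\emph{Step 1: range and (super)martingale property.} For $x\in\supp\P$ and $n\ge2$, shift--invariance gives $\P([x_2^n])=\P(T^{-1}[x_2^n])\ge \P([x_1^n])$, since $[x_1^n]\subseteq T^{-1}[x_2^n]$. Hence $Z_n\ge1$, and trivially $Z_1=\P([x_1])^{-1}\ge1$.

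For the conditional expectation, on the atom $[x_1^n]$ I compute
\[
\E(Z_{n+1}\mid\cF_n)=\sum_{a:\,\P([x_1^na])>0}\frac{\P([x_1^na])}{\P([x_1^n])}\cdot\frac{\P([x_2^na])}{\P([x_1^na])}=\frac{1}{\P([x_1^n])}\sum_{a:\,\P([x_1^na])>0}\P([x_2^na]).
\]
If the sum ran over all $a\in\cA$, it would equal $\P([x_2^n])$ and give exactly $Z_n$. I will therefore check carefully whether terms with $\P([x_1^na])=0<\P([x_2^na])$ can be excluded, or whether they are simply absorbed in the stated convention. In any case the computation shows $\E(Z_{n+1}\mid\cF_n)\le Z_n$. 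A nonnegative supermartingale already converges $\P$--a.s. to a finite limit $Z\ge1$, and hence $\log Z_n\to\log Z$. So item (i)'s convergence conclusion does not hinge on the equality.

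\emph{Step 2: maximal inequality (the main step).} Fix $t>0$ and a symbol $a\in\cA$. Let $\tau(x)$ be the first $n$ with $Z_n(x)>t$. For $x$ with $x_1=a$, the condition $Z_n(x)>t$ reads $\P([a\,x_2^n])<t^{-1}\P([x_2^n])$, which depends only on the word $x_2^n$. Consequently the family of words $x_2^n$ (over all $n$) for which $\tau=n$ and $x_1=a$ indexes cylinders $[x_2^n]$ that are pairwise disjoint in $\Omega$. Indeed, they are the "first hitting" words of a rule depending only on $a$ and the tail, so no such word is a prefix of another. Therefore
\[
\P(\tau<\infty,\;x_1=a)=\sum_{n}\sum_{x_2^n}\P([a\,x_2^n])<\frac1t\sum_n\sum_{x_2^n}\P([x_2^n])\le\frac1t .
\]
Summing over $a$ yields $\P(Z_{\max}>t)\le|\cA|/t$. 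Substituting $t\mapsto\Exp{t}$ gives the exponential tail. Then $\E\log Z_{\max}=\int_0^\infty\P(\log Z_{\max}>s)\,\diff s\le\int_0^\infty \min(1,|\cA|\Exp{-s})\,\diff s<\infty$. Since $0\le\log Z_n\le\log Z_{\max}$, dominated convergence upgrades the a.s. convergence to $\L^1(\P)$. The delicate point here is the disjointness/prefix argument, which is where the dependence on the first symbol must be isolated; I expect this to need the most care.

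\emph{Step 3: expectations.} Since $\log Z_n(x)=-\log\P([x_1^n])-(-\log\P([x_2^n]))$ and $x_2^n$ has law $\P_{n-1}$ by shift--invariance, I get $\E(\log Z_n)=H(\P_n)-H(\P_{n-1})$. The same holds for $n=1$ with $H(\P_0)=0$. By the $\L^1$ convergence of Step 2, $\E(\log Z_n)\to\E(\log Z)$. A convergent sequence has the same Cesàro limit, so
\[
\E(\log Z)=\lim_{n\to\infty}\frac1n\sum_{k=1}^n\bigl(H(\P_k)-H(\P_{k-1})\bigr)=\lim_{n\to\infty}\frac{H(\P_n)}{n}=h(\P),
\]
using \eqref{eq:h}.
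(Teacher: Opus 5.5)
Your proposal is correct and follows the paper's route. The steps match: Doob convergence, a first-passage decomposition for the maximal inequality, dominated convergence for the $\L^1$ upgrade, and the Ces\`aro argument for (iii). For the maximal inequality, the paper packages your per-symbol prefix-free disjointness into the auxiliary measure $\nu$ with $\nu_n(x_1^n)=|\cA|^{-1}\P([x_2^n])$, which is the same computation summed over the first symbol.

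Your hesitation in Step~1 should be resolved negatively, and doing so exposes a flaw in the paper's own argument. Terms with $\P([x_1^na])=0<\P([x_2^na])$ genuinely occur. For the stationary golden-mean Markov chain, where the word $11$ is forbidden, one gets $\E(Z_2\mid\cF_1)=\P([0])/\P([1])<1/\P([1])=Z_1$ on $\{x_1=1\}$. So the paper's step $\sum_{a:\,\P([x_1^na])>0}\P([x_2^na])=\sum_{a\in\cA}\P([x_2^na])$ fails, and $(Z_n)$ is in general only a nonnegative supermartingale. As you observe, that is all the convergence claims and the rest of the paper require.
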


For any $x\in\supp\P$, we observe the identities
\begin{align}\label{eq:decomp0}
     -\log\P([x_1^n])
=\sum_{k=0}^{n-1}\log Z_{n-k}(T^k x)= I_n(x)+J_n(x),
\end{align}
where
\[
I_n(x):=\sum_{k=0}^{n-1}\log Z(T^k x),
\qquad
J_n(x):=\sum_{k=0}^{n-1}\varphi_{n-k}(T^k x),
\quad
\varphi_r:=\log Z_r-\log Z.
\]
The 1957 Breiman proof of the SMB amounts essentially to proving that
$n^{-1}J_n\to0$, both $\P$-a.s.\ and in $\L^1(\P)$
(see~\cite{Breiman57,Brei60}).

We also define, for fixed $M\in\nn$, the truncated and residual quantities
\[
I_{n,M}(x):=\sum_{k=0}^{n-1}\log Z_M(T^k x),
\qquad
J_{n,M}(x):=\sum_{k=0}^{n-1}\varphi_{n-k,M}(T^k x),
\quad
\varphi_{r,M}:=\log Z_r-\log Z_M.
\]
Then
\begin{equation}\label{eq:decomp}
-\log\P([x_1^n]) = I_{n,M}(x)+J_{n,M}(x),
\end{equation}
and the original decomposition \eqref{eq:decomp0} is recovered by letting $M\to\infty$.
This truncation ensures that all relevant quantities depend only on finitely many
coordinates, and will simplify the proof that follows.

\subsection{Parsings}\label{subsec:parsing}

\begin{definition}[Parsing]\label{def:parsing}
A \emph{parsing} of $x_1^N=(x_1,x_2,\dots,x_N)$ is an array of random variables
\[
\big\{(w_i^{(N)}(x))_{i=1}^{c_N(x)}\big\}_{N\ge1},
\]
where $c_N:\Omega\to\nn_{\leq N}$ and
$w_i^{(N)}:\Omega\to\bigcup_{k=1}^N\cA^k$,
such that for every $N\in\nn$ and every $x\in\Omega$,
\[
x_1^N
=
w_1^{(N)}(x)\,w_2^{(N)}(x)\cdots w_{c_N(x)}^{(N)}(x).
\]
We write
\[
\ell_{i,N}:=|w_i^{(N)}|,\qquad
L_{i,N}:=\sum_{j=1}^i \ell_{j,N},
\qquad L_{0,N}:=0.
\]
\end{definition}

\subsection{Three Auxiliary Lemmas}\label{subsec:auxlemmas}

\paragraph{Auxiliary notation.}
Let $(w_i^{(N)})_{i=1}^{c_N}$ be a parsing of $x_1^N=(x_1,x_2,\dots,x_N)$.
For $(m,M)\in\nn^2$ and $k\in\{0,\dots,m\}$, define the random indices
\begin{align}\label{eq:index}
  r_{i,k,N}:=L_{i-1,N}+\ell_{i,N}-1-k,
\end{align}
and introduce the random variables
\begin{align}\label{eq:Ydef}
Y^{(N)}_{i,k,M}(x)
:=
\log Z_{k+1}(T^{r_{i,k,N}}x)
+
\log Z_M(T^{r_{i,k,N}}x)
\end{align}
and
\begin{align}\label{eq:Ydef 0}
Y^{(N)}_{0,k,M}(x)
:=
\log Z_{k+1}(T^{N-k}x)
+
\log Z_M(T^{N-k}x).
\end{align}

We define
\begin{align}\label{eq:res}
    J_{n,M}(w_i^{(N)}(x)):=\sum_{k=0}^{\ell_{i,N}-1}\varphi_{n-k,M}(T^{k+L_{i-1,N}} x),
\end{align}
and suppress the $N$ and $x$ dependence when the context is clear.
Finally, we define
\begin{align}\label{eq:psi}
   \psi_{m,M}
:=
\sup_{k\ge m}\big|\log Z_k-\log Z_M\big|,
\qquad (m,M)\in\nn^2.
\end{align}

\begin{remark}\label{rem:psi}
The non-negative sequence $(\psi_{m,M})_{m,M\geq 1}$ is uniformly bounded by
$2\log Z_{\max}\in\L^1(\P)$ (see Lemma~\ref{lem:martingale}) and is therefore uniformly integrable.
\end{remark}

\begin{lemma}[Birkhoff sum--parsing decomposition]\label{lem:det}
Let $\{(w_i^{(N)}(x))_{i=1}^{c_N(x)}\}_{N\ge1}$ be a parsing.
Then, for every $m\ge M$ and every $N\ge1$, we have the decomposition
\begin{equation}\label{eq:estim}
\sum_{i=1}^{c_N(x)}\big|J_{\ell_i,M}(w_i^{(N)}(x))\big|
\le
\sum_{k=0}^{N-1}\psi_{m,M}(T^k x)
+
\sum_{i=1}^{c_N(x)}\sum_{k=0}^m Y^{(N)}_{i,k,M}(x)
\end{equation}
in terms of the ``Birkhoff sum''
$\sum_{k=0}^{N-1}\psi_{m,M}(T^k x)$
and the
``parsing-dependent'' term
$\sum_{i=1}^{c_N(x)}\sum_{k=0}^m Y^{(N)}_{i,k,M}(x)$.
\end{lemma}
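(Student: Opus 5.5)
The plan is to prove \eqref{eq:estim} block by block, splitting each block's terms according to their distance from the right end of the block, and then summing over $i$. Fix $N$, $x\in\supp\P$ and a block index $i$, and write $\ell=\ell_{i,N}$, $L=L_{i-1,N}$. By \eqref{eq:res} with $n=\ell$,
\[
J_{\ell,M}(w_i^{(N)}(x))=\sum_{k=0}^{\ell-1}\varphi_{\ell-k,M}(T^{k+L}x).
\]
I will reindex by the distance to the right end of the block, $j:=\ell-1-k$. The shift exponent then becomes $L+\ell-1-j=r_{i,j,N}$ as in \eqref{eq:index}, and the index of $\varphi$ becomes $j+1$. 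Hence
\[
J_{\ell,M}(w_i^{(N)})=\sum_{j=0}^{\ell-1}\varphi_{j+1,M}(T^{r_{i,j,N}}x),
\]
and the triangle inequality reduces everything to bounding $\sum_{j}|\varphi_{j+1,M}(T^{r_{i,j,N}}x)|$.

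\emph{Far terms} ($j\ge m$). Here $j+1>m$, so by the definition \eqref{eq:psi},
\[
|\varphi_{j+1,M}|=|\log Z_{j+1}-\log Z_M|\le \psi_{m,M}.
\]
Therefore these terms contribute at most $\sum_{j=m}^{\ell-1}\psi_{m,M}(T^{r_{i,j,N}}x)$.

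\emph{Near terms} ($0\le j\le \min(m,\ell-1)$). By Lemma~\ref{lem:martingale}(i), $Z_r\ge1$, so $\log Z_{j+1}\ge0$ and $\log Z_M\ge0$. Hence
\[
|\varphi_{j+1,M}|\le \log Z_{j+1}+\log Z_M,
\]
and evaluating at $T^{r_{i,j,N}}x$ gives exactly $Y^{(N)}_{i,j,M}(x)$ from \eqref{eq:Ydef}. (The overlap at $j=m$, which belongs to both ranges, is harmless, since either bound may be used.) Since every $Y$ is nonnegative, the near terms are bounded by $\sum_{k=0}^m Y^{(N)}_{i,k,M}(x)$. When $\ell-1<m$, the surplus indices $k\in\{\ell,\dots,m\}$ only add nonnegative quantities. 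I would state a convention for these surplus terms, for instance that $r_{i,k,N}<0$ is read as $0$, or that the inner sum is truncated at $\min(m,\ell-1)$. Either way the inequality is unaffected, and later use only needs an upper bound.

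Summing over $i=1,\dots,c_N(x)$, the near parts give precisely the parsing-dependent term in \eqref{eq:estim}. For the far parts, the key observation is that the indices $r_{i,j,N}=L_{i-1,N}+\ell_{i,N}-1-j$ with $0\le j\le\ell_{i,N}-1$ run bijectively over the positions $\{L_{i-1,N},\dots,L_{i,N}-1\}$ of block $i$. These position sets are disjoint across $i$ and together cover $\{0,\dots,N-1\}$. Because $\psi_{m,M}\ge0$, the union of the far-term positions is a subset of $\{0,\dots,N-1\}$, so their total is at most the full Birkhoff sum $\sum_{k=0}^{N-1}\psi_{m,M}(T^kx)$. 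This yields \eqref{eq:estim}. The hypothesis $m\ge M$ does not seem to be needed for the inequality itself; I expect it matters only later, when $m\to\infty$ is taken before $M\to\infty$.

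The main obstacle is bookkeeping rather than analysis. One must check that the reindexed shift exponents are exactly the $r_{i,k,N}$ of \eqref{eq:index} and that they tile $\{0,\dots,N-1\}$ without double counting. One must also handle short blocks ($\ell_{i,N}\le m$) consistently with the definition of $Y^{(N)}_{i,k,M}$, which is where the nonnegativity supplied by Lemma~\ref{lem:martingale}(i) is essential.
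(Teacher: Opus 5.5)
Your proposal is correct and follows the paper's proof essentially step for step. You reverse the index within each block so that the shifts become $r_{i,k,N}$ and split at $m$. You then bound the far terms by $\psi_{m,M}$, dominating them by the full Birkhoff sum via $\psi_{m,M}\ge 0$, and the near terms by $Y^{(N)}_{i,k,M}$ via $\log Z_r\ge 0$. Your explicit handling of short blocks ($\ell_{i,N}\le m$, where the paper's index range $k\le m$ silently overshoots the block) and your remark that $m\ge M$ is not used in the inequality itself are slightly more careful than the paper, which glosses over both.
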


\begin{proof}
Recall that $\varphi_{r,M}=\log Z_r-\log Z_M$. From definition \eqref{eq:res},
\[
J_{\ell_i,M}(w_i^{(N)}(x))
=
\sum_{k=0}^{\ell_i-1}
\varphi_{\ell_i-k,M}(T^{L_{i-1}+k}x),
\]
and therefore
\[
\sum_{i=1}^{c_N}\big|J_{\ell_i,M}(w_i^{(N)}(x))\big|
\le
\sum_{i=1}^{c_N}\sum_{k=0}^{\ell_i-1}
\big|\log Z_{\ell_i-k}(T^{L_{i-1}+k}x)-\log Z_M(T^{L_{i-1}+k}x)\big|.
\]
Reversing the index inside each block yields
\[
\sum_{i=1}^{c_N}\sum_{k=0}^{\ell_i-1}
\big|\log Z_{k+1}(T^{L_{i-1}+\ell_i-1-k}x)-\log Z_M(T^{L_{i-1}+\ell_i-1-k}x)\big|.
\]
Fix $m\ge M$. Splitting the inner sum at $m$ gives, in the notation of \eqref{eq:index},
\begin{align*}
&\sum_{i=1}^{c_N}\sum_{k=m+1}^{\ell_i-1}
\big|\log Z_{k+1}(T^{r_{i,k,N}}x)-\log Z_M(T^{r_{i,k,N}}x)\big|
\\
&\qquad
+\sum_{i=1}^{c_N}\sum_{k=0}^{m}
\big|\log Z_{k+1}(T^{r_{i,k,N}}x)-\log Z_M(T^{r_{i,k,N}}x)\big|,
\end{align*}
By Definition~\ref{def:parsing} and \eqref{eq:psi}, the first term is bounded by
\[
\sum_{i=1}^{c_N}\sum_{k=m+1}^{\ell_i-1}\psi_{m,M}(T^{r_{i,k,N}}x)
\le
\sum_{k=0}^{N-1}\psi_{m,M}(T^k x),
\]
where the inequality holds even when $\ell_i<m$, since $\psi_{m,M}\ge0$.
For the second term, using $|\log Z_{k+1}-\log Z_M|\le \log Z_{k+1}+\log Z_M$, we obtain
\[
\sum_{i=1}^{c_N}\sum_{k=0}^{m}
\big|\log Z_{k+1}(T^{r_{i,k,N}}x)-\log Z_M(T^{r_{i,k,N}}x)\big|
\le
\sum_{i=1}^{c_N}\sum_{k=0}^{m} Y^{(N)}_{i,k,M}(x).
\]
Combining the two bounds yields
\[
\sum_{i=1}^{c_N}\big|J_{\ell_i,M}(w_i^{(N)}(x))\big|
\le
\sum_{k=0}^{N-1}\psi_{m,M}(T^k x)
+
\sum_{i=1}^{c_N}\sum_{k=0}^{m} Y^{(N)}_{i,k,M}(x)
\]
and proves the claim.
\end{proof}

\begin{lemma}[Negligibility of the parsing-dependent term in \eqref{eq:estim}]\label{lem:conc}
Let $\{(w_i^{(N)}(x))_{i=1}^{c_N(x)}\}_{N\ge1}$ be a parsing with $c_N=o(N)$, $\P$-a.s.
Then, for every fixed $m\geq M$,
\begin{align}\label{eq:pars.dep.negl}
  \sum_{i=1}^{c_N(x)}\sum_{k=0}^m Y^{(N)}_{i,k,M}(x)=o(N),\qquad \P\text{-a.s. and in }\L^1(\P).
\end{align}
In particular, the parsing-dependent term in \eqref{eq:estim} of Lemma~\ref{lem:det}
is negligible at scale $N$.
\end{lemma}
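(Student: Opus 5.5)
The plan is to exploit the fact that, for \emph{fixed} $m$ and $M$, every summand $Y^{(N)}_{i,k,M}$ is a function of only finitely many coordinates, namely at most $\max(m+1,M)$ of them. Because $\cA$ is finite, such a function takes only finitely many values on $\supp\P$, so it is bounded by a deterministic constant. Once that is in place, the double sum is bounded by a constant times $c_N$, and the sublinearity assumption \eqref{eq:sublinear} gives the result directly. In particular, no maximal inequality or uniform integrability argument for sums over data-dependent positions is needed.

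\emph{Step 1 (deterministic bound).} For $x\in\supp\P$ and $n\ge1$, the numerator of $Z_n(x)$ is at most $1$ (and $Z_1(x)=\P([x_1])^{-1}$). Hence
\[
1\le Z_n(x)\le \P([x_1^n])^{-1}\le p_n^{-1},
\qquad
p_n:=\min\{\P_n(a): a\in\supp\P_n\}>0,
\]
and $p_n>0$ because $\supp\P_n$ is a finite set.

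We also need the shifted points $T^r x$ to stay in $\supp\P$ when $x\in\supp\P$. Shift invariance gives
\[
\P([x_{r+1}^{r+n}])=\P(T^{-r}[x_{r+1}^{r+n}])\ge \P([x_1^{r+n}])>0,
\]
so $T^r x\in\supp\P$. Consequently, for $0\le k\le m$ and every admissible index $r_{i,k,N}$,
\[
0\le Y^{(N)}_{i,k,M}(x)\le \log p_{k+1}^{-1}+\log p_M^{-1}\le C_{m,M}:=\max_{j\le m+1}\log p_j^{-1}+\log p_M^{-1}.
\]

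\emph{Step 2 (a.s.\ convergence).} There is an edge case to settle first. When $k\ge \ell_{i,N}$, the index $r_{i,k,N}$ falls outside block $i$, and it could even be negative for $i=1$. I would either adopt the convention that such terms are absent, which is how Lemma~\ref{lem:det} actually uses them, since they arise from splitting a sum over $k\le \ell_i-1$, or note that the same constant bound applies whenever the index is nonnegative. In either case Step 1 gives
\[
0\le \sum_{i=1}^{c_N(x)}\sum_{k=0}^m Y^{(N)}_{i,k,M}(x)\le (m+1)\,C_{m,M}\,c_N(x).
\]
Dividing by $N$ and applying \eqref{eq:sublinear} gives convergence to $0$, $\P$-a.s.

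\emph{Step 3 ($\L^1$ convergence).} Since $c_N\le N$, the right-hand side divided by $N$ is bounded by the constant $(m+1)C_{m,M}$ and tends to $0$ $\P$-a.s. Dominated (bounded) convergence then yields $\E\bigl[(m+1)C_{m,M}\,c_N/N\bigr]\to0$, which proves the $\L^1(\P)$ statement. The same argument shows that convergence in probability of $c_N/N$ suffices for the $\L^1$ claim, as in the remark following Corollary~\ref{cor:factorization}.

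The only real obstacle is Step 1: seeing that the fixed truncation level makes $Y$ uniformly bounded, together with the small check that shifts of support points remain in the support. The constant $C_{m,M}$ blows up as $m\to\infty$. That is harmless here because $m$ is fixed, and the limit $m\to\infty$ is presumably handled later through the Birkhoff term $\sum_k\psi_{m,M}(T^kx)$ in \eqref{eq:estim}.
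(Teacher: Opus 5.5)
Your proof is correct and is essentially the paper's argument: you bound $\log Z_{k+1}+\log Z_M$ on $\supp\P$ by a deterministic constant (the paper's $\Delta_{m,M}$, your $C_{m,M}$), deduce the bound $(m+1)C_{m,M}\,c_N(x)$, and get the $\L^1(\P)$ claim from bounded convergence, which the paper phrases as uniform integrability plus a.s.\ convergence. Your explicit checks that $T^{r}x\in\supp\P$ whenever $x\in\supp\P$, and your handling of the boundary terms with $k\ge \ell_{i,N}$, fill in details the paper leaves implicit.
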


\begin{proof}
Fix integers $m\geq M$.
Since $Z_{k+1}$ and $Z_M$ depend on finitely many coordinates and $\cA$ is finite,
we have
\[
\sup_{x\in\supp\P}\log Z_{k+1}(x)<\infty,
\qquad
\sup_{x\in\supp\P}\log Z_M(x)<\infty .
\]
It follows that
\[
\Delta_{m,M}
:=
\sup_{\substack{x\in\supp\P\\ 0\le k\le m}}
\Big(\log Z_{k+1}(x)+\log Z_M(x)\Big)
<\infty,
\]
and $Y^{(N)}_{i,k,M}(x)\leq \Delta_{m,M}$ for every $x\in\supp\P$, $N\in\nn$, $1\leq i\leq c_N(x)$, and $0\leq k\leq m$.
Therefore,
\[
\sum_{i=1}^{c_N(x)}\sum_{k=0}^m Y^{(N)}_{i,k,M}(x)
\le
(m+1)\,\Delta_{m,M}\,c_N(x).
\]
Since $c_N=o(N)$ $\P$-a.s., we obtain
\[
\sum_{i=1}^{c_N(x)}\sum_{k=0}^m Y^{(N)}_{i,k,M}(x)
=o(N),
\qquad
\P\text{-a.s.},
\]
establishing the $\P$-a.s. claim in \eqref{eq:pars.dep.negl}. To deal with the $\L^1(\P)$ claim, we note that by non-negativity and uniform boundedness, the sequence
\begin{align}\label{eq:sequ}
  \Bigg(
\frac{1}{N}\sum_{i=1}^{c_N(x)}\sum_{k=0}^m Y^{(N)}_{i,k,M}(x)
\Bigg)_{N\ge 1}
\end{align}
is uniformly integrable. Since it converges to zero \(\P\)-a.s., the convergence also holds in \(\L^1(\P)\).
\end{proof}

\begin{remark}\label{rem:in prob}
Lemma~\ref{lem:conc} is the only point in the argument where the sublinearity
assumption~\eqref{eq:sublinear} is used. Because the sequence~\eqref{eq:sequ} is
uniformly integrable, convergence to zero in \(\L^1(\P)\) already follows from the
weaker condition
\[
\frac{c_N}{N} \xrightarrow{\P} 0.
\]
This condition ensures that, for every fixed $m \geq M$, the sequence~\eqref{eq:sequ}
converges to zero in probability as $N \to \infty$.
\end{remark}

\begin{lemma}[Negligibility of the Birkhoff sum in \eqref{eq:estim}]\label{lem:conc.ergo}
There exist sequences of natural numbers $(\widetilde M_j)_{j\geq 0}$ and non-negative functions $(\widehat{\psi}_{\infty,\widetilde M_j})_{j\geq 1}$, such that
\begin{align}\label{eq:erg.a.s.}
 \lim_{m\to\infty}
\lim_{N\to\infty}
\frac{1}{N}\sum_{k=0}^{N-1}\psi_{m,\widetilde M_j}(T^k x)=\widehat{\psi}_{\infty,\widetilde M_j},\qquad
\lim_{j\to\infty}\widehat{\psi}_{\infty,\widetilde M_j}=0,\qquad
\text{hold }\P\text{-a.s.}
\end{align}
Moreover,
\begin{align}\label{eq:ergexp}
  \lim_{M\to\infty}\lim_{m\to\infty}
\lim_{N\to\infty}
\E\!\left[\frac{1}{N}\sum_{k=0}^{N-1}\psi_{m,M}(T^k )\right]=0.
\end{align}
\end{lemma}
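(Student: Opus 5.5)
The plan is to apply Birkhoff's ergodic theorem (non-ergodic version, limits given by conditional expectations on the invariant $\sigma$-field $\mathcal I$) to $\psi_{m,M}\in\L^1(\P)$. Remark~\ref{rem:psi} gives integrability. Then I will pass to the limits in $m$ and $M$ by monotone and dominated convergence, and handle the exceptional null sets by extracting a subsequence $(\widetilde M_j)$.

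\textbf{Step 1 (inner limit in $N$).} For fixed $(m,M)$, $\psi_{m,M}\le 2\log Z_{\max}\in\L^1(\P)$. Birkhoff's theorem therefore gives
\[
\frac1N\sum_{k=0}^{N-1}\psi_{m,M}\circ T^k\to \widehat\psi_{m,M}:=\E[\psi_{m,M}\mid\mathcal I]
\]
$\P$-a.s.\ and in $\L^1(\P)$. In particular the expectation of the Birkhoff average equals $\E[\psi_{m,M}]$ for every $N$, by shift-invariance.

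\textbf{Step 2 (limit in $m$).} The map $m\mapsto\psi_{m,M}$ is non-increasing, since the supremum is over $k\ge m$. Moreover $\psi_{m,M}\to\psi_{\infty,M}:=|\log Z-\log Z_M|$ $\P$-a.s.\ by Lemma~\ref{lem:martingale}(i). Conditional monotone (or dominated) convergence then yields $\widehat\psi_{m,M}\downarrow\widehat\psi_{\infty,M}:=\E[\psi_{\infty,M}\mid\mathcal I]$ $\P$-a.s. Only countably many $(m,M)$ are involved, so one null set suffices. This proves the first identity in \eqref{eq:erg.a.s.} for every $M$, in particular along any subsequence.

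\textbf{Step 3 (limit in $M$; main point).} By Lemma~\ref{lem:martingale}(ii), $\log Z_M\to\log Z$ in $\L^1(\P)$, so
\[
\E[\widehat\psi_{\infty,M}]=\E|\log Z-\log Z_M|\to0 .
\]
Combined with Steps~1--2 and dominated convergence in $m$ (dominating function $2\log Z_{\max}$), this gives \eqref{eq:ergexp} directly: the $N$-limit of the expectations is $\E\psi_{m,M}$, which decreases to $\E\psi_{\infty,M}$, which tends to $0$. For the almost sure statement, the non-negative functions $\widehat\psi_{\infty,M}$ converge to $0$ in $\L^1(\P)$. I will choose $\widetilde M_j$ with
\[
\E\widehat\psi_{\infty,\widetilde M_j}\le 2^{-j}.
\]
Then $\sum_j\E\widehat\psi_{\infty,\widetilde M_j}<\infty$, so by Borel--Cantelli, or simply by monotone convergence of the series, $\widehat\psi_{\infty,\widetilde M_j}\to0$ $\P$-a.s.

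Alternatively, conditional dominated convergence (dominating function $2\log Z_{\max}$, with $\psi_{\infty,M}\to0$ a.s.) gives a.s.\ convergence along the full sequence $M\to\infty$. The subsequence formulation is the safer route, and it is what the statement asks for. The main obstacle is precisely this passage from $\L^1$ to almost sure convergence in $M$, together with keeping track of the order of limits and the null sets. The remaining steps are routine applications of the ergodic and martingale convergence theorems already recorded in Lemma~\ref{lem:martingale}.
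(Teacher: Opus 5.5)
Your proposal is correct and follows essentially the same route as the paper: Birkhoff's theorem for each fixed $(m,M)$, then monotone convergence as $m\to\infty$ using that $\psi_{m,M}$ is non-increasing in $m$, then the $\L^1(\P)$ convergence $\log Z_M\to\log Z$ from Lemma~\ref{lem:martingale}, then extraction of a subsequence $(\widetilde M_j)$ to upgrade to almost sure convergence. The differences are minor: you identify the Birkhoff limits as conditional expectations given the invariant $\sigma$-field (the paper instead dominates them by $2\Psi$, the Birkhoff limit of $\log Z_{\max}$), and, as you note, this identification lets conditional dominated convergence give $\widehat\psi_{\infty,M}\to0$ almost surely along the full sequence $M\to\infty$, a device the paper uses only later, in the proof of Proposition~\ref{prop:extension}.
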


\begin{proof}
Fix $m,M\in\mathbb N$.
By the pointwise ergodic theorem for shift-invariant measures~\cite[Theorem~I.3.1]{SH96} and Remark~\ref{rem:psi}, there exists a
shift-invariant function $\widehat\psi_{m,M}\in \L^1(\P)$ such that
\[
\lim_{N\to\infty}
\frac{1}{N}\sum_{k=0}^{N-1}\psi_{m,M}(T^k)=\widehat\psi_{m,M},
\qquad \P\text{-a.s.},
\]
and $\E(\widehat\psi_{m,M})=\E(\psi_{m,M})$. Since $\log Z_{\max} \in \L^1(\P)$, this same result applies
to $\log Z_{\max}$ as well. Consequently, there exists a shift-invariant
$\Psi \in \L^1(\P)$ such that
\[
\lim_{N\to\infty}
\frac{1}{N}\sum_{k=0}^{N-1}\log Z_{\max}(T^k)=\Psi,
\qquad \P\text{-a.s.},
\]
with $\E(\Psi)=\E(\log Z_{\max})$.

Since the sequence $(\psi_{m,M})_{m\ge1}$ of \eqref{eq:psi} is non-increasing, the same holds for
$(\widehat\psi_{m,M})_{m\ge1}$, and moreover $0\leq\widehat\psi_{m,M}\le 2\Psi$ for all $m,M$.
Hence $\widehat\psi_{m,M}\downarrow \widehat\psi_{\infty,M}$ $\P$-a.s. as $m\to\infty$.
Using
\[
\lim_{m\to\infty}\psi_{m,M}=|\log Z-\log Z_M|,
\qquad
\E(\psi_{1,M})\le 2\E(\log Z_{\max})<\infty,
\]
the monotone convergence theorem yields
\begin{equation}\label{eq:psi-limit}
\E(\widehat\psi_{\infty,M})
=\lim_{m\to\infty}\E(\psi_{m,M})
=\E(|\log Z-\log Z_M|);
\end{equation}
and by Lemma~\ref{lem:martingale} again,
\begin{equation}\label{eq:L1-conv}
\lim_{M\to\infty}\E(|\log Z-\log Z_M|)=0.
\end{equation}
From \eqref{eq:psi-limit}--\eqref{eq:L1-conv}, we obtain $\widehat\psi_{\infty,M}\to 0$ in probability as $M\to\infty$; therefore, there exists a sequence of natural numbers $(\widetilde M_j)_{j\geq 0}$ such that
\begin{align}
    \lim_{j\to\infty}\widehat\psi_{\infty,\widetilde M_j}=0,\qquad \P\text{-a.s.},
\end{align}
establishing \eqref{eq:erg.a.s.}.
Using shift-invariance and linearity,
\[
\E\!\left[\frac{1}{N}\sum_{k=0}^{N-1}\psi_{m,M}(T^k )\right]=\E(\psi_{m,M}).
\]
The convergence in expectation follows from \eqref{eq:psi-limit}--\eqref{eq:L1-conv}.
\end{proof}

\subsection{Proof of Theorem~\ref{thm:main}}

\begin{proof}
Applying the decomposition~\eqref{eq:decomp} to each block $(w_i^{(N)}(x))_{i=1}^{c_N}$, where
\[
I_{\ell_i,M}(w_i^{(N)}(x)):=\sum_{k=0}^{\ell_i-1}\log Z_M(x_{L_{i-1}+1}^{L_{i-1}+\ell_i}),
\]
and summing over $i$, yields
\begin{align}\label{eq:block-decomp}
\sum_{i=1}^{c_N}-\log\P([w_i^{(N)}])
&=\sum_{i=1}^{c_N}\Big(I_{\ell_i,M}(w_i^{(N)})+J_{\ell_i,M}(w_i^{(N)})\Big) \\
&=\sum_{i=1}^{c_N} I_{\ell_i,M}(x_{L_{i-1}+1}^{L_{i-1}+\ell_i})
 +\sum_{i=1}^{c_N} J_{\ell_i,M}(x_{L_{i-1}+1}^{L_{i-1}+\ell_i}).
\end{align}
The first term telescopes:
\begin{align}
\sum_{i=1}^{c_N} I_{\ell_i,M}(x_{L_{i-1}+1}^{L_{i-1}+\ell_i})
&=\sum_{i=1}^{c_N}\sum_{k=0}^{\ell_i-1}\log Z_M(T^{L_{i-1}+k}x)\label{eq:sum of blocks}\\
&=\sum_{j=0}^{N-1}\log Z_M(T^j x)\\
&= I_{N,M}(x)\\
&=-\log\P([x_1^N]) - J_{N,M}(x),
\end{align}
where the last identity follows from applying the decomposition~\eqref{eq:decomp}
to the block $x_1^N$.

Combining the above gives
\begin{equation}\label{eq:final-bound}
\Bigg|
\frac{1}{N}\sum_{i=1}^{c_N}-\log\P([w_i^{(N)}])
+\frac{1}{N}\log\P([x_1^N])
\Bigg|
\le
\frac{1}{N}|J_{N,M}(x)|
+\frac{1}{N}\sum_{i=1}^{c_N}|J_{\ell_i,M}(w_i^{(N)})|,\qquad \forall M\in\nn.
\end{equation}

We note that the term $J_{N,M}(x)$ corresponds to the special case of the trivial
parsing $w_1^{(N)}=x_1^N$. We first prove the $\P$-a.s.\ convergence to zero of the left-hand side of \eqref{eq:final-bound}. Observe that \eqref{eq:final-bound} remains valid when replacing $M$ by each term of the subsequence $(\widetilde M_j)_{j\geq 0}$ from Lemma~\ref{lem:conc.ergo}.
Referring to the notation of \eqref{eq:Ydef}, \eqref{eq:Ydef 0}, and \eqref{eq:psi}, Lemma~\ref{lem:det} shows that the right-hand side of \eqref{eq:final-bound} is bounded from above by
\begin{align}\label{eq:finalfr}
    \frac{2}{N}\sum_{k=0}^{N-1}\psi_{m, \widetilde M_j}(T^kx)+\frac{1}{N}\sum_{k=0}^mY^{(N)}_{0,k,\widetilde M_j}(x)+\frac{1}{N}\sum_{i=1}^{c_N(x)}\sum_{k=0}^m Y^{(N)}_{i,k,\widetilde M_j}(x),\qquad \forall m\geq \widetilde M_j.
\end{align}
Equations \eqref{eq:pars.dep.negl}
and~\eqref{eq:erg.a.s.} imply that~\eqref{eq:finalfr}
converges $\P$-almost surely to
$2\widehat{\psi}_{\infty,\widetilde M_j}(x)$,
after sending $N\to\infty$ and then $m\to\infty$.

Finally, \eqref{eq:erg.a.s.} gives the $\P$-almost sure convergence to zero of $\widehat{\psi}_{\infty,\widetilde M_j}$ as $j\to\infty$. Therefore, the left-hand side of \eqref{eq:final-bound} converges to zero $\P$-a.s.

Convergence to zero in $\L^1(\P)$ of the left-hand side of \eqref{eq:final-bound} is obtained in the same way. Indeed, it follows from \eqref{eq:finalfr} combined with \eqref{eq:pars.dep.negl} and \eqref{eq:ergexp}.

The conclusion follows by applying the Shannon--McMillan--Breiman theorem to
$-\frac{1}{N}\log\P([x_1^N])$.
\end{proof}

\subsection{The Necessity of Sublinear Block Count}
\label{sec:necessity}

As stated in the introduction, the condition \eqref{eq:sublinear} is essential for convergence at this level of generality.
Indeed, for any $K\in\mathbb{N}$, one can construct parsings
$\{w^{(N)}_i\}_{i=1}^{c_N}$ satisfying
\begin{align}\label{eq:extensive}
    c_N \sim \frac{N}{K}, \qquad \P\text{-a.s.},
\end{align}
such that
\begin{align}\label{eq:example nonexistence}
 \lim_{N\to\infty}
\frac{1}{N}\sum_{i=1}^{c_N} -\log\P([w_i^{(N)}])
\quad\text{does not exist }\P\text{-a.s.},
\end{align}
for any $\P$ that is not a Markov measure of order $\leq K$.

\medskip

Let $K\in\mathbb{N}$ be even (this assumption is only for the sake of
the construction below), and let $\P$ be an ergodic measure such that
\begin{align}\label{eq:discrepency}
   \frac{1}{K}H(\P_K)
>
\frac{1}{2}\Big(\frac{2}{K}H(\P_{\frac{K}{2}})+h(\P)\Big).
\end{align}
The inequality \eqref{eq:discrepency} holds if $\P$ is not a Markov measure of order
$\le K$.
To see this, write
\[
H(\P_K)=H(\P_{K/2})+\sum_{n=K/2+1}^{K}\beta_n(\P),
\qquad
\beta_n(\P):=H(\P_n)-H(\P_{n-1}).
\]
It is well-known that the sequence $(\beta_n(\P))_{n\ge1}$ is non-increasing, converges to
$h(\P)$, and $\beta_{m+1}(\P)=h(\P)$ if and only if $\P$ is a Markov measure of
order $\leq m$; see~\cite[Ch.~4]{CT06}. Hence, inequality~\eqref{eq:discrepency} holds if and only if there exists
$n\in\{K/2+1,\dots,K\}$ such that $\beta_n(\P)>h(\P)$. Therefore, inequality~\eqref{eq:discrepency} holds for any $\P$ which is not a Markov measure of order $\leq K$.

\medskip

We will construct two distinct parsings
$\{u^{(N)}_i\}_{i=1}^{a_N}$ and
$\{v^{(N)}_i\}_{i=1}^{b_N}$, both satisfying
\eqref{eq:extensive}, and such that
\begin{itemize}
    \item[a.]
    The parsing $\{u^{(N)}_i\}_{i=1}^{a_N}$ is deterministic (that is, does not depend on the observed data) and
    \begin{align}\label{eq:deterministicfinalexam}
      \lim_{N\to\infty}
    \frac{1}{N}\sum_{i=1}^{a_N}-\log\P([u_i^{(N)}])
    =
    \frac{1}{K}H(\P_K),
    \qquad
    \P\text{-a.s.}
    \end{align}

    \item[b.]
    The parsing $\{v^{(N)}_i\}_{i=1}^{b_N}$ is data--dependent and
   \begin{align}\label{eq:datadependentexample}
      \lim_{N\to\infty}
    \frac{1}{N}\sum_{i=1}^{b_N}-\log\P([v_i^{(N)}])
    =
    \frac{1}{2}\Big(\frac{2}{K}H(\P_{\frac{K}{2}})+h(\P)\Big),
    \qquad
    \P\text{-a.s.}
   \end{align}
\end{itemize}
We conclude by defining
\[
\{w^{(N)}_i\}_{i=1}^{c_N}
:=
\begin{cases}
\{u^{(N)}_i\}_{i=1}^{a_N}, & \text{if $N$ is even},\\
\{v^{(N)}_i\}_{i=1}^{b_N}, & \text{if $N$ is odd},
\end{cases}
\]
and noting that we have \eqref{eq:example nonexistence} for it.

\subsubsection{The Construction of $\{u^{(N)}_i\}_{i=1}^{a_N}$}

We set
\[
u_1^{(N)}(x):=x_1^{K},\quad
u_2^{(N)}(x):=x_{K+1}^{2K},\ \ldots,\
u_{a_N-1}^{(N)}(x):=
x_{K(\lfloor\frac{N}{K}\rfloor-1)+1}^{K\lfloor\frac{N}{K}\rfloor},
\]
and
\[
u_{a_N}^{(N)}(x):=
x_{K\lfloor\frac{N}{K}\rfloor+1}^{N}.
\]
Then $a_N\sim \frac{N}{K}$ and
\begin{align}\label{eq:exam}
\frac{1}{N}\sum_{i=1}^{a_N}-\log\P([u_i^{(N)}])
&=
\frac{1}{K}\sum_{j=1}^{K}
\frac{K}{N}
\sum_{n=0}^{\lfloor\frac{N}{K}\rfloor-1}
\log Z_j(T^{Kn+j}x)+o(1),
\end{align}
where the $o(1)$ term comes from the last $\leq K$ symbols coming from $u_{a_N}^{(N)}(x)$. This term is negligible compared to $N$ because the random variables $Z_j$
(for $j\le K$) are uniformly bounded, as a direct consequence of the finiteness of the
alphabet $\cA$.

By the Birkhoff pointwise ergodic theorem~\cite[Theorem~I.3.1]{SH96} and Lemma~\ref{lem:martingale},
for every $j=1,\dots,K$,
\begin{align}\label{eq:exam0}
\lim_{N\to\infty}
\frac{K}{N}
\sum_{n=0}^{\lfloor\frac{N}{K}\rfloor-1}
\log Z_j(T^{Kn+j}x)
=
H(\P_j)-H(\P_{j-1}),
\qquad
\P\text{-a.s.}
\end{align}
where $H(\P_0):=0$.
Combining \eqref{eq:exam} and \eqref{eq:exam0}, we obtain
\[
\lim_{N\to\infty}
\frac{1}{N}\sum_{i=1}^{a_N}-\log\P([u_i^{(N)}])
=
\frac{1}{K}H(\P_K),
\qquad
\P\text{-a.s.}
\]

\subsubsection{The Construction of $\{v^{(N)}_i\}_{i=1}^{b_N}$}

We fix $0<\epsilon\ll1$, and consider a set $E_\epsilon\in\mathcal{F}$ such
that $\P(E_\epsilon)\ge 1-\frac{1}{2}\epsilon$ and such that, uniformly on
$E_\epsilon$,
\[
\lim_{n\to\infty}
\frac{1}{n}\log\P([x_1^n])
=
h(\P).
\]
The existence of such a set follows from Egoroff’s theorem \cite{Rud87} applied to the
Shannon--McMillan--Breiman theorem. Furthermore, by the Birkhoff pointwise ergodic theorem, there exists
a full $\P$--measure set $\Omega_{E_\epsilon}$, such that
\[
\lim_{N\to\infty}
\frac{1}{N}\sum_{k=0}^{N-1}\mathds{1}_{E_\epsilon}(T^k x)
=
\P(E_\epsilon),\qquad \text{everywhere on }\Omega_{E_\epsilon}.
\]

For every $x\in\Omega_{E_\epsilon}$ and $N$ sufficiently large, there
exists $k(x)\in[(\frac12-\epsilon)N,\frac12 N]$ such that
$T^{k(x)}x\in E_\epsilon$.
We then define
\[
v_1^{(N,\epsilon)}:=x_1^{\frac{K}{2}},\quad
v_2^{(N,\epsilon)}:=x_{\frac{K}{2}+1}^{K},\ \ldots,\
v_{\lfloor\frac{k(x)}{K}\rfloor}^{(N,\epsilon)}:=
x_{k(x)-\frac{K}{2}}^{k(x)-1},
\]
and
\[
v_{\lfloor\frac{k(x)}{K}\rfloor+1}^{(N,\epsilon)}:=x_{k(x)}^{N}.
\]
Setting $b_{N,\epsilon}(x):=\lfloor\frac{k(x)}{K}\rfloor+1$, we conclude
\[
\frac{N(1-2\epsilon)}{K}
\le b_{N,\epsilon}
\le \frac{N}{K}
\quad\text{eventually }\P\text{-a.s.},
\]
and an argument similar to that of part~[a.] yields a full--measure set
$\Omega_\epsilon$, such that
\[
\limsup_{N\to\infty}
\frac{1}{N}\sum_{i=1}^{b_{N,\epsilon}}
-\log\P([v_i^{(N,\epsilon)}])
\le
\frac{1}{2}\Big(\frac{2}{K}H(\P_{\frac{K}{2}})+h(\P)\Big),
\]
\[
\liminf_{N\to\infty}
\frac{1}{N}\sum_{i=1}^{b_{N,\epsilon}}
-\log\P([v_i^{(N,\epsilon)}])
\ge
\frac{1}{2}\Big(
\frac{2(1-2\epsilon)}{K}H(\P_{\frac{K}{2}})
+
(1+2\epsilon)h(\P)
\Big),
\]
and
\[
\frac{1-2\epsilon}{K}
\le
\liminf_N\frac{b_{N,\epsilon}}{N}
\le
\limsup_N\frac{b_{N,\epsilon}}{N}
\le
\frac{1}{K},
\]
hold everywhere on $\Omega_\epsilon$. Now consider a nonincreasing sequence
$\epsilon_j\in(0,\frac14)$ with $\epsilon_j\downarrow0$, and define
$\Omega^*:=\bigcap_{j\ge1}\Omega_{\epsilon_j}$.
For every $x\in\Omega^*$, there exists a sequence of integers
$(N_j(x))_{j\ge0}$ such that for all $N\ge N_j(x)$,
\[
\frac{1}{N}
\sum_{i=1}^{b_{N,\epsilon_j}(x)}
-\log\P([v_i^{(N,\epsilon_j)}(x)])
\ge
\frac{1}{2}\Big(
\frac{2(1-4\epsilon_j)}{K}H(\P_{\frac{K}{2}})
+
(1+\epsilon_j)h(\P)
\Big),
\]
and
\[
\frac{N(1-4\epsilon_j)}{K}
\le b_{N,\epsilon_j}(x)
\le \frac{N}{K}.
\]

Define the parsing $\{v^{(N)}_i\}_{i=1}^{b_N}$ on $\Omega^*$ by
\[
\{v^{(N)}_i(x)\}_{i=1}^{b_N(x)}
:=
\{v^{(N,\epsilon_j)}_i(x)\}_{i=1}^{b_{N,\epsilon_j}(x)},
\quad
N\in[N_j(x),N_{j+1}(x)).
\]
Then
\[
b_N(x)\sim\frac{N}{K},
\]
and
\[
\liminf_{N\to\infty}
\frac{1}{N}\sum_{i=1}^{b_N(x)}
-\log\P([v_i^{(N)}(x)])
\ge
\frac{1}{2}\Big(\frac{2}{K}H(\P_{\frac{K}{2}})+h(\P)\Big),
\]
\[
\limsup_{N\to\infty}
\frac{1}{N}\sum_{i=1}^{b_N(x)}
-\log\P([v_i^{(N)}(x)])
\le
\frac{1}{2}\Big(\frac{2}{K}H(\P_{\frac{K}{2}})+h(\P)\Big),
\]
hold everywhere on $\Omega^*$, which proves part~[b.].

\section{Robustness, Interpretation, and Extensions}\label{sec:consequences}

\subsection{Robustness Under Subextensive Perturbations}
\label{sec:robustness}

We noted in Remark~\ref{rem:rigid} that the argument underlying
Theorem~\ref{thm:main} is robust under subextensive perturbations of the
parsing blocks. We now make this statement precise.

For each $N\ge1$, let
\[
\bigl\{ w_i^{(N)}(x) \bigr\}_{i=1}^{c_N(x)}
\]
denote the blocks of the parsing of $x_1^N$, and define the corresponding block
lengths and starting indices by
\[
\ell_{i,N}(x) := |w_i^{(N)}(x)|,
\qquad
L_{i-1,N}(x) := \sum_{j=1}^{i-1} \ell_{j,N}(x).
\]

\medskip
We consider two types of perturbations of the blocks
$\bigl\{ w_i^{(N)}(x) \bigr\}_{i=1}^{c_N(x)}$.

\medskip
\noindent\textbf{Sub--block perturbations.}
The block $w_i^{(N)}$ is replaced by a sub--block
\(
\underline w_i^{(N)} \subseteq w_i^{(N)}.
\)

\medskip
\noindent\textbf{Super--block perturbations.}
The block $w_i^{(N)}$ is replaced by a super--block
\(
\overline w_i^{(N)} \supseteq  w_i^{(N)}.
\)
Writing $\overline \ell_{i,N} := |\overline w_i^{(N)}|$ and denoting by
$\overline L_{i-1,N}$ the starting index of the super-block $\overline w_i^{(N)}$, we assume that
the interval
\[
\bigl[\overline L_{i-1,N},\, \overline L_{i-1,N} + \overline \ell_{i,N}\bigr)
\]
has non--empty intersection only with the neighboring original blocks\footnotemark,
namely
\[
\bigl[ L_{i-2,N},\, L_{i-2,N} + \ell_{i-1,N}\bigr)
\quad\text{and}\quad
\bigl[ L_{i,N},\, L_{i,N} + \ell_{i+1,N}\bigr).
\]
\footnotetext{The argument also extends to the case where the intervals
\(
[\overline L_{i-1,N},\, \overline L_{i-1,N} + \overline \ell_{i,N})
\)
are allowed to intersect a uniformly bounded number of intervals of the form
\(
[ L_{i,N},\, L_{i,N} + \ell_{i+1,N})
\).}

For each $N$, the choice of perturbation scheme is allowed to vary from block to
block. Denoting by
\(
\{\widetilde w_i^{(N)}(x)\}_{i=1}^{c_N(x)}
\)
the resulting modified blocks, we assume that the total modification is
subextensive, in the sense that
\[
\sum_{i=1}^{c_N(x)}
|\widetilde w_i^{(N)}(x)|
= N+o(N),
\qquad \P\text{-a.s.}
\]
\textit{Under these perturbations, the conclusion of Theorem~\ref{thm:main} remains
valid.}

\begin{proof}
We apply the same block decomposition as in
\eqref{eq:block-decomp} and~\eqref{eq:sum of blocks}, now to the modified blocks
$\widetilde w_i^{(N)}$, with lengths
\(
\widetilde \ell_{i,N} := |\widetilde w_i^{(N)}|
\)
and corresponding starting indices $\widetilde L_{i,N}$.
The bound~\eqref{eq:final-bound} is replaced by the $\P$-a.s. inequality
\begin{align}\label{eq:finish2.0}
  \Bigg|
\sum_{i=1}^{c_N}
    -\log \P([\widetilde w_i^{(N)}])
+\log \P([x_1^N])
\Bigg|
\leq
|J_{N,M}(x)|
+ o(N)\,\Delta_M
+ \sum_{i=1}^{c_N}
    \bigl|J_{\widetilde \ell_{i,N},M}(\widetilde w_i^{(N)})\bigr|,
\end{align}
where
\[
\Delta_M := \sup_{y_1^M\in\cA^M} \log Z_M(y_1^M).
\]

The first term on the right-hand side of \eqref{eq:finish2.0} is handled exactly as in the proof
of Theorem~\ref{thm:main}. Furthermore, the subextensive assumption ensures
that the additional contribution proportional to $\Delta_M$ vanishes after
normalization by $N$.

The remaining error term
\[
\sum_{i=1}^{c_N}
\bigl|J_{\widetilde \ell_{i,N},M}(\widetilde w_i^{(N)})\bigr|
\]
is controlled by the same auxiliary estimates established in
Section~\ref{subsec:auxlemmas}. The conclusion then follows by the same limiting
procedure as in the proof of Theorem~\ref{thm:main}, completing the argument.
\end{proof}

\begin{remark}
We do not know whether an analogue of Theorem~\ref{thm:main} continues to hold in
the absence of a condition controlling the number of overlaps between
super--blocks and neighboring parsed blocks. In other words, we do not know if the conclusion of Theorem~\ref{thm:main} still holds if the only requirement imposed on sub-block and super-block perturbation is
\[
\sum_{i=1}^{c_N(x)}|\widetilde w_i(x)|=N+o(N),\qquad\P\text{-a.s.}
\]
\end{remark}

\subsection{Coarse--graining}
\label{sec:coarsegraining}

In this section, we draw a parallel between Theorem~\ref{thm:main}, the
robustness results of Section~\ref{sec:robustness}, and the notion of
\emph{coarse--graining} familiar from statistical mechanics.

From an information--theoretic viewpoint, a length--$N$ cylinder
$[x_1^N]$ may be regarded as a microscopic configuration, whose information
content is measured by $-\log\P([x_1^N])$. The Shannon--McMillan--Breiman theorem
asserts that, for $\P$--almost every realization,
\[
-\frac{1}{N}\log\P([x_1^N]) \longrightarrow h(\P),
\]
so that the specific information content of a typical microscopic configuration
converges to a deterministic limit $h(\P)$ whenever $\P$ is ergodic.

Coarse--graining, in a broad sense, consists of replacing a detailed microscopic
description by a collection of aggregated observables at a ``mesoscopic'' scale defined on larger
``blocks,'' and studying the corresponding normalized quantities. In our
setting, a parsing $x_1^N = w_1^{(N)}\cdots w_{c_N}^{(N)}$ with $c_N=o(N)$ induces
such a coarse description, where the relevant observables are the block information contents
$-\log\P([w_i^{(N)}])$.

Theorem~\ref{thm:main} shows that, despite this coarser description, the
normalized sum of blockwise information contents
\[
\frac{1}{N}\sum_{i=1}^{c_N}-\log\P([w_i^{(N)}])
\]
has the same almost--sure limit as the microscopic quantity
$-\frac{1}{N}\log\P([x_1^N])$. By contrast, Section~\ref{sec:necessity} shows that
allowing $c_N$ to grow faster than sublinearly may destroy the existence of this
limit. Finally, the robustness results of Section~\ref{sec:robustness} reinforce
this coarse--graining analogy, by showing that subextensive modifications of the
parsing do not affect the asymptotic specific information content.

\subsection{Extension to Non-finite Alphabets}\label{sec:infinitealphabets}

In this section we briefly indicate how Theorem~\ref{thm:main}
can potentially be extended to the setting of Barron’s generalized
Shannon--McMillan--Breiman theorem for densities~\cite{Bar85}.

The principal technical ingredient in the proof of
Theorem~\ref{thm:main} is Lemma~\ref{lem:martingale},
which relies on the finiteness of the alphabet $\cA$.
Barron showed in his seminal work~\cite{Bar85} that an analogue of
Lemma~\ref{lem:martingale} continues to hold in the setting of densities,
where conditional probabilities are replaced by Radon--Nikodym
derivatives with respect to a reference Markov measure.
Thus, the martingale component of the argument admits a natural
extension beyond finite alphabets.

Apart from Lemma~\ref{lem:martingale}, finiteness of $\cA$
enters the proof through Lemma~\ref{lem:conc}, where it is used to
control the boundary error terms
\[
\sum_{k=0}^m Y^{(N)}_{i,k,M},
\]
arising from the arbitrary parsing.
More precisely, finiteness of $\cA$ ensures that these boundary
contributions are negligible, in the sense that for every fixed $m\ge M$,
\[
  \sum_{i=1}^{c_N(x)} \sum_{k=0}^m Y^{(N)}_{i,k,M}(x)
  = o(N),
  \qquad \P\text{-a.s.\ and in }\L^1(\P).
\]
In the finite--alphabet setting, this ultimately follows from a
uniform boundedness property of the local error terms, which
prevents the random boundary effects induced by the parsing from
accumulating at linear scale.

The finiteness assumption may, however, be bypassed by appealing
directly to an ergodic--theoretic argument. The following consequence
of the ergodic theorem provides the required control; its proof is
given in Appendix~\ref{sec:proof-prop-extension}.

\begin{proposition}\label{prop:extension}
Let $T:\Omega\to\Omega$ be a measure--preserving transformation
of the probability space $(\Omega,\cF,\P)$, and let
$g\in\L^1(\P)$. Let $A_N:\Omega\to 2^{\{1,\dots,N\}}$
be a sequence of set--valued functions satisfying
\[
\lim_{N\to\infty}\frac{|A_N(x)|}{N}=0,
\qquad \P\text{-a.s.}
\]
Then
\[
\lim_{N\to\infty}
\frac{1}{N}\sum_{k\in A_N(x)} g(T^k x)
=0,
\qquad \P\text{-a.s.\ and in }\L^1(\P).
\]
\end{proposition}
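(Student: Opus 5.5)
The plan is to reduce to a nonnegative integrand, split $g$ into a bounded part and a tail part, and control each part differently. The bounded part is handled by the sparsity assumption $|A_N|=o(N)$. The tail part is handled by the pointwise ergodic theorem applied over the full window $\{1,\dots,N\}$, which requires no knowledge of $A_N$.

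\medskip

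\noindent\textbf{Step 1: reduction to $g\ge0$ and truncation.} Since
\[
\Big|\sum_{k\in A_N(x)}g(T^kx)\Big|\le \sum_{k\in A_N(x)}|g|(T^kx),
\]
it suffices to treat $g\ge0$. Fix a truncation level $K>0$ and write $g=(g\wedge K)+(g-K)^+$. For the bounded part, we have for every $x$
\[
\frac1N\sum_{k\in A_N(x)}(g\wedge K)(T^kx)\le K\,\frac{|A_N(x)|}{N}\longrightarrow0 \qquad \P\text{-a.s.}
\]
For the tail part, nonnegativity lets us enlarge the index set:
\[
\frac1N\sum_{k\in A_N(x)}(g-K)^+(T^kx)\le \frac1N\sum_{k=1}^{N}(g-K)^+(T^kx).
\]
By the pointwise ergodic theorem \cite[Theorem~I.3.1]{SH96}, the right-hand side converges $\P$-a.s. to $\E[(g-K)^+\mid\mathcal I]$, where $\mathcal I$ is the invariant $\sigma$-field. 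The shift of the summation range from $\{0,\dots,N-1\}$ to $\{1,\dots,N\}$ is harmless, since it changes the sum by $(g(T^Nx)-g(x))/N$, and $g(T^Nx)/N\to0$ a.s. because $g(T^N x)/N$ is the difference of two consecutive Birkhoff averages rescaled. Combining the two parts gives
\[
\limsup_{N\to\infty}\frac1N\sum_{k\in A_N(x)}g(T^kx)\le \E[(g-K)^+\mid\mathcal I]\qquad\text{for every }K,\ \P\text{-a.s.}
\]
We then let $K\to\infty$ along the integers. Conditional monotone (or dominated) convergence gives $\E[(g-K)^+\mid\mathcal I]\downarrow0$ a.s., which yields the almost-sure claim.

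\medskip

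\noindent\textbf{Step 2: the $\L^1(\P)$ claim.} The nonnegative quantities $\frac1N\sum_{k\in A_N}g(T^k\cdot)$ are dominated by the Birkhoff averages $S_N:=\frac1N\sum_{k=1}^N g(T^k\cdot)$. These converge in $\L^1(\P)$ by the $\L^1$ ergodic theorem, so the family $(S_N)_N$ is uniformly integrable. Alternatively, uniform integrability follows directly: by $T$-invariance, each $g\circ T^k$ has the same distribution as $g$, so the family $\{g\circ T^k\}$ is uniformly integrable, and convex averages of a uniformly integrable family remain uniformly integrable. Domination transfers uniform integrability to our sequence, and uniform integrability together with a.s. convergence to $0$ gives $\L^1$ convergence (Vitali).

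\medskip

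\noindent\textbf{Main point of care.} I expect the only real issue to be that $A_N$ depends on $x$ arbitrarily, so no ergodic theorem can be applied along $A_N$ itself. The truncation circumvents this: the unbounded part is controlled over all indices, and only the bounded part uses sparsity. I will also note that we implicitly assume the sums are measurable, for instance that $\{x: k\in A_N(x)\}\in\cF$ for each $k$ and $N$.
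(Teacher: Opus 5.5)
Your proposal is correct and follows essentially the same route as the paper: split $g$ into a bounded part controlled by the sparsity bound $K|A_N|/N$ and a tail part dominated by the full Birkhoff average, then let the truncation level tend to infinity using convergence of the conditional expectations on the invariant $\sigma$-field. The differences are cosmetic. You obtain the $\L^1(\P)$ claim via uniform integrability and Vitali rather than by taking expectations in the same bound, and you handle the index range $\{1,\dots,N\}$ versus $\{0,\dots,N-1\}$ and the measurability of the sums more carefully than the paper does.
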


Proposition~\ref{prop:extension} shows that sublinear collections of indices
cannot contribute at linear scale to ergodic averages of
integrable functions. Consequently, once the martingale
decomposition is available in the density setting,
there appears to be no essential obstruction to extending
Theorem~\ref{thm:main} by replacing the information contents
$-\log \P([w_i^{(N)}])$ with the corresponding density
terms arising in Barron’s generalized SMB theorem~\cite{Bar85}.

We do not pursue this generalization here.
The finite--alphabet framework is conceptually and technically simpler, and admits a clean
``approximate factorization'' interpretation~\eqref{eq:decoupling0}.

\appendix
\section{Appendix}\label{appendix}

\subsection{Proof of Lemma~\ref{lem:martingale}}\label{sec:proof-lem-martingale}

\begin{proof}
Observe that, given the first bound in (ii), the second is trivial, and then the rest of the claim is a simple application of the representation
\[
\mathbb{E}_\P(X)=\int_0^\infty\P(X>t)\;dt,
\]
valid for any measurable $X\geq0$, and the Dominated Convergence Theorem. Also, the $\P$-a.s.\ existence of the limit $Z$ is guaranteed by Doob's Martingale Convergence Theorem, because $Z_n^{-}=0$ for every $n$ and so $\sup_{n\geq2}\mathbb{E}_\P(Z_n^-)=0<\infty$.

To show that $Z_n$ is a martingale, it suffices to prove
\[
\frac{1}{\P([x_1^n])}\mathbb{E}_{\P}(Z_{n+1}\mathds{1}_{[x_1^n]})= Z_n(x_1^n),
\]
for any $x_1^n\in\supp\P_n$. To this end, note that
\begin{align*}
\mathbb{E}_{\P}(Z_{n+1}\mathds{1}_{[x_1^n]})
&=\sum_{a\in\cA,\;\P([x_1^na])>0}Z_{n+1}(x_1^na)\P([x_1^na])\\
&=\sum_{a\in\cA,\;\P([x_1^na])>0}\frac{\P([x_2^na])}{\P([x_1^na])}\P([x_1^na])\\
&=\sum_{a\in\cA,\;\P([x_1^na])>0}\P([x_2^na])\\
&=\sum_{a\in\cA}\P([x_2^na])\\
&=\P([x_2^n])=\P([x_1^n])Z_n(x_1^n),
\end{align*}
which proves (i.). To prove~(iii), we adapt the argument of Ionescu--Tulcea~\cite{Ion61}
to the present setting.
Fix $t>0$ and consider the set $A:=\{Z_{\text{max}}>t\}$ and the disjoint sets
$A_k:=\{Z_1\leq t,\dots,Z_{k-1}\leq t, Z_k>t\}\in\cF_k$ with $\bigcup_{k\geq 1}A_k=A$.
Applying Markov's inequality to $\P(A_k)$ gives
\[
\P(A)=\sum_{k\geq1}\P(A_k)\leq\frac{1}{t}\sum_{k\geq1}\int_{A_k}Z_k\;\text{d}\P
\leq\frac{1}{t}\sum_{k\geq1}\sum_{x_1^k\in A_k}\P([x_2^k]).
\]
Now consider the unique probability measure $\nu$ on $(\Omega,\cF)$ with marginals $\nu_n$ defined by
\[
\nu_n(x_1,\dots,x_n)=\frac{1}{|\cA|}\P([x_2^n]), \qquad n\in\nn
\]
(this is a standard application of the Daniell--Kolmogorov Consistency Theorem).
The above estimate becomes
\[
\P(A)\leq \frac{|\cA|}{t}\sum_{k\geq 2}\nu_k(A_k)=\frac{|\cA|}{t}\nu(A)\leq \frac{|\cA|}{t}.
\]
Here we have used the fact that the sets $A_k\in\cF_k$, $k\in\nn$ are disjoint, and that $\nu$ is a probability measure on $(\Omega,\cF)$. This proves (ii). The first part of (iii) is a simple computation. The second part follows from an application of (ii), the Cesàro Lemma, and the fact that for any shift-invariant $\P$,
\[
\lim_{n\to\infty}\frac{H(\P_n)}{n}=h(\P).
\]
\end{proof}

\subsection{Proof of Proposition~\ref{prop:extension}}\label{sec:proof-prop-extension}

\begin{proof}
Fix $M\in\mathbb{N}$ and decompose
\begin{equation}\label{eq:decomp-tail}
g = g_M + h_M,
\qquad
g_M := g\,\mathbf{1}_{\{|g|\le M\}},
\quad
h_M := g\,\mathbf{1}_{\{|g|>M\}}.
\end{equation}
Since $|g_M|\le M$, we have
\begin{equation}\label{eq:bounded}
\frac{1}{N}\sum_{k\in A_N(x)} |g_M(T^k x)|
\le
M\,\frac{|A_N(x)|}{N}.
\end{equation}
By assumption,
\begin{equation}\label{eq:sublinear2}
\frac{|A_N(x)|}{N} \to 0,
\qquad \P\text{-a.s.,}
\end{equation}
and hence the term in \eqref{eq:bounded} converges to $0$
$\P$-a.s.\ and in $\L^1(\P)$.

For the tail part,
\begin{equation}\label{eq:tail-bound}
\frac{1}{N}\sum_{k\in A_N(x)} |h_M(T^k x)|
\le
\frac{1}{N}\sum_{k=0}^{N-1} |h_M(T^k x)|.
\end{equation}
By the ergodic theorem,
\begin{equation}\label{eq:ergodic}
\frac{1}{N}\sum_{k=0}^{N-1} |h_M(T^k x)|
\longrightarrow
\E\big[|h_M| \mid \mathcal{T}\big],
\qquad \P\text{-a.s.\ and in }\L^1(\P),
\end{equation}
where $\mathcal{T}$ denotes the $T$-invariant $\sigma$-field.

Since $|h_M| \le g\in\L^1(\P)$ and $h_M \to 0$ $\P$-a.s., we have
\begin{equation}\label{eq:cond}
\lim_{M\to\infty}\E\big[|h_M| \mid \mathcal{T}\big]= 0,
\qquad \P\text{-a.s.\ and in }\L^1(\P).
\end{equation}
Finally,
\[
\frac{1}{N}\Big|\sum_{k\in A_N(x)} g(T^k x)\Big|
\le
M\,\frac{|A_N(x)|}{N}
+
\frac{1}{N}\sum_{k=0}^{N-1} h_M(T^k x).
\]
Letting $N\to\infty$ and using \eqref{eq:sublinear2} and \eqref{eq:ergodic},
then letting $M\to\infty$ and applying \eqref{eq:cond}, yields the result.
\end{proof}

\section*{Acknowledgments}

The author thanks Renaud Raquépas for helpful discussions,
Vojkan Jakšić for drawing his attention to the questions addressed in this work,
and his advisor Ioannis Karatzas for guidance and support. The author is also
grateful to Andrew Barron for insightful comments on an early version of the
manuscript, which motivated the addition of Section~\ref{sec:infinitealphabets}, and to Gaia Pozzoli for
careful comments on an earlier version.

This research was partially supported by the Fonds de Recherche du Québec--Nature
et Technologies (FRQNT).

\setlength{\labelsep}{0.01em}

\end{document}